\documentclass[a4paper,12pt]{article}
\usepackage{geometry,amssymb,graphicx,tensor,epigraph,amsmath,amsthm, xspace, listings, stmaryrd, tcolorbox, float, tikz-cd, mathtools, afterpage}
\usepackage{authblk}
\usepackage[T1]{fontenc}
\usepackage[utf8]{inputenc}
\usepackage[english]{babel}
\usepackage[textwidth=0.89in,textsize=scriptsize]{todonotes} 
\usepackage{hyperref}
\hypersetup{
	colorlinks=true, 
	linkcolor=blue,  
}





\theoremstyle{plain}
\newtheorem{prop}{Proposición}
\numberwithin{prop}{section}
\newtheorem*{prop*}{Proposición}

\newtheorem*{teo*}{Teorema}

\newtheorem*{coro*}{Corolario}

\newtheorem*{lema*}{Lema}

\theoremstyle{definition}

\newtheorem*{Def*}{Definición}
\newtheorem{Obs/Def}[prop]{Observación/Definición}

\theoremstyle{remark}

\newtheorem*{obs*}{Observación}
\newtheorem{Not/Obs}[prop]{Notación/Observación}
\newtheorem*{Not/Obs*}{Notación/Observación}

\numberwithin{ej_subseccion}{subsection}
\newtheorem*{ej*}{Ejemplo}
\newtheorem*{ej/def}{Ejemplo/Definición}

\newcommand{\keywords}[1]{\textbf{\textit{Keywords---}} #1}


\begin{document}
\title{Pool Value Replication (CPM) and Impermanent Loss Hedging}

\author[1,3]{Agustín Muñoz González}

\author[1,2]{\\ Juan I. Sequeira}

\author[]{\\ Ariel Dembling}

\affil[1]{Departamento de Matem\'aticas, Facultad de Ciencias Exactas y Naturales, Universidad de Buenos Aires, Buenos Aires, Argentina}
\affil[3]{POL Finance}

\affil[2]{IMAS-CONICET, Buenos Aires, Argentina}

\maketitle


\begin{abstract}
    This paper analytically characterizes impermanent loss for automated market makers (AMMs) in decentralized markets such as Uniswap or Balancer (CPMM). We derive a static replication formula for the pool's value using a combination of European calls and puts. Furthermore, we establish a result guaranteeing hedging coverage for all final prices within a predefined interval. These theoretical results motivate a numerical example where we illustrate the strangle strategy using real cryptocurrency options data from Deribit, one of the most liquid markets available.
\end{abstract}

\keywords{Impermanent Loss, Static Hedging, Constant Product Market Maker, Decentralized Finance}
\hspace{10pt}
\clearpage
\tableofcontents
\clearpage
\section*{Introduction}

A decentralized exchange (DEX) is a platform that enables cryptocurrency trading similarly to traditional exchanges, but unlike them, a DEX operates without the need for intermediaries. Such exchanges function via smart contracts, making them autonomous tools that anyone, anywhere in the world, can use without requiring permissions or authorizations from any central authority responsible for this service.

Historically, many different decentralized exchanges (DEXs) have been proposed using a variety of market-making or price discovery mechanisms, ranging from classical order book methods \cite{warren20170x} to more complex approaches involving specialized bonding curves \cite{hertzog2017bancor}. However, a simple yet surprisingly effective automated market maker seems to be the constant product market maker (CPMM), popularized by Uniswap \cite{zhang2018formal} and Balancer.

While order books are the dominant medium of electronic asset exchange in traditional finance \cite{wyart2008relation}, they are difficult to implement within a smart contract environment. The state size required by an order book to represent the set of pending orders is large and extremely costly within the smart contract setting, where users must pay for the storage and computational power utilized \cite{wood2017ethereum}. Furthermore, order matching logic is usually complex, as it often needs to support multiple different order types (such as iceberg, good-till-cancel, and stop-limit orders \cite{wyart2008relation}). To avoid significant on-chain execution costs (paid to the miners/validators of the smart contract by the agents executing trades), a variety of decentralized exchange designs use the underlying blockchain only for settlement, executing operations off-chain instead \cite{warren20170x}.

On the other hand, automated market makers (AMMs) have been extensively studied in algorithmic game theory, starting with Hanson's logarithmic market scoring rule (LMSR) \cite{hanson2003combinatorial}, which is often used in practice as an AMM for prediction markets. Such AMMs are typically constructed by liquidity providers first depositing assets in a fixed ratio to specify an initial belief distribution about possible outcomes. Then, the AMM provides a scoring rule specifying the cost of changing the belief distribution from its current state to a new desired state. This scoring rule incentivizes traders to truthfully reveal their belief that the expected value of adjusting the distribution is positive. Since the exchange state depends solely on the total deposited quantities, corresponding storage requirements are substantially lower than those of traditional exchanges. Additionally, pricing a trade requires only a single function evaluation, in contrast to more complicated matching algorithms like those used in order books.

A constant product market (CPM) is a mechanism to trade pairs of assets. A typical CPM is a pool containing certain amounts of both assets, along with a rule specifying how much of one asset will be exchanged for the other. Arbitrage with external spot markets ensures that the ratio of assets in the pool remains close to the prevailing exchange rate.

Liquidity providers supply the asset pair and can withdraw their liquidity at any time, receiving a proportion according to the new asset ratio, as well as a share of transaction fees. It can be shown that the payoff to liquidity providers is proportional to the square root of the price ratio between the two assets \cite{angeris2019analysis}. This payoff can be precisely replicated using a static combination of futures and options \cite{clark2020replicating}.

Liquidity providers are exposed to impermanent loss, which is only realized when liquidity is withdrawn from the pool. This loss is typically computed as the difference between the value of the tokens provided in the liquidity pool and the value of simply holding the tokens statically upon entering the pool. Since traders always exchange less valuable tokens for more valuable ones, liquidity providers consistently suffer impermanent losses (IL) that can be significant.

In this paper, we propose a static hedging strategy for liquidity providers using standard European options to eliminate the impact of impermanent loss. Firstly, we demonstrate that liquidity providers have an exposure equivalent to holding both long and short positions in various call and put options. We then derive a result ensuring impermanent loss protection within a given price interval $[P_i, P_s]$, provided certain inequalities relating to the quantities and costs of purchased puts and calls, total invested capital, and returns obtained from the pool are satisfied, considering the classic Long Strangle strategy.

In the results section, we illustrate this strategy through an application example of a hedging strategy, considering the duration we decide to remain in the pool, the initial capital, and the available options at the moment of implementing this project, utilizing real Ethereum option data from the Deribit options market.

\addcontentsline{toc}{section}{Introduction}
\section{Value of Constant Product Markets}

\subsection*{Constant Product Market}

A constant product market (CPM) is a market for exchanging tokens $X$ for tokens $Y$ (and vice versa). This market has reserves $x > 0$ and $y > 0$, a constant product $k = xy$, and a percentage fee $(1-\gamma)$. A transaction in this market, exchanging $\Delta_y > 0$ tokens $Y$ for $\Delta_x > 0$ tokens $X$, must satisfy:
\begin{equation}\label{ec: prod const}
(x - \Delta_x)(y + \gamma \Delta_y) = k.
\end{equation}

The reserves are updated as follows:
$y \to y + \Delta_y$, $x \to x - \Delta_x$, and $k \to (x - \Delta_x)(y + \Delta_y)$.

The term ``constant product market'' arises from the fact that when the fee is zero (i.e., $\gamma = 1$), any exchange $\Delta_y \leftrightarrow \Delta_x$ must alter the reserves such that the product $xy$ remains constant and equal to $k$.

\subsection*{Spot Market}

A spot market is a mechanism that exchanges $\Delta_x$ units of $X$ for $m_p^t \Delta_y$ units of $Y$ at time $t$. An infinitely elastic spot market is one where $m_p^t$ does not depend on $\Delta_y$.

\subsection*{The Value of a Constant Product Market}

Let $x^t$, $y^t$, and $m_p^t \in \mathbb{R}_+$ be the reserves of token $X$, token $Y$, and the relative market price of token $X$ in terms of token $Y$, respectively, at each time $t = 0, \dots, T$. On one hand, under the no-arbitrage hypothesis, the spot market price ($m_p^t$) coincides with the price implied by the constant product market ($y^t/x^t$), that is, $m_p^t = y^t/x^t$. On the other hand, if fees are not considered, the definition of a constant product market implies $x^t y^t \equiv k$ for all $t$. Combining these two statements yields:
\begin{equation}
    y^{t} = \sqrt{k m_{p}^{t}}.
\end{equation}

We can use this expression to compute the relative return between time $t-1$ and $t$:
\begin{equation*}
    \delta^{t} = \frac{m_p^{t} x^{t} + y^{t}}{m_p^{t-1} x^{t-1} + y^{t-1}} = \frac{y^t}{y^{t-1}} = \sqrt{\frac{m_p^t}{m_p^{t-1}}}.
\end{equation*}

Thus, the total relative gain is given by:
\begin{equation*}
    \delta = \prod_{t=1}^{T}\delta^{t} = \sqrt{\frac{m_p^{T}}{m_p^{0}}},
\end{equation*}
and the total portfolio value is:
\begin{equation}\label{valor del cpm}
    P_{V}^{T} = (m_p^{0} x^{0} + y^{0}) \delta = 2\sqrt{k m_p^{T}}.
\end{equation}

\section{Static Replication of a Payoff using Options and Bonds}

In the following section, we use \cite{carr2001towards}, which reviews the static replication theory using options originally developed by Ross \cite{ross1976options} and Breeden and Litzenberger \cite{breeden1978prices}.

Consider a situation in which investments are made at time $0$ and all payoffs are received at time $T$. In contrast to the standard intertemporal model, we assume that there are no trading opportunities except at times $0$ and $T$. We further assume that there exists a futures market for a risky asset (for example, a stock index) delivering at some date $T' \geq T$. We also assume that markets for European-style futures options exist\footnote{Note that futures options are typically American-style. However, when considering $T = T'$, the underlying futures converge to the spot at $T$, implying that European-style spot options effectively exist in this special case.} for all strikes. Although the assumption of a continuum of strikes is far from standard, it is essentially analogous to the conventional assumption of continuous trading. This assumption serves as a reasonable approximation in a setting where investors can trade frequently and when a large but finite number of strike options exist (for instance, futures options on the S\&P 500).

This market structure enables investors to replicate any smooth payoff function $f(P_T)$ of the futures' final price by taking a static position in options at time $0$\footnote{This observation was first noted by Breeden and Litzenberger \cite{breeden1978prices} and formally established by Green and Jarrow \cite{green1987spanning} and Nachman \cite{nachman1988spanning}}. In Appendix \ref{apendice}, we show how any payoff—i.e., any function $f:\mathbb{R}_+\to \mathbb{R}$ mapping the price $p$ of a risky asset to an amount $f(p)$—if required to be twice differentiable, can be expressed as:

\begin{equation}\label{replicador con opciones}
     f(P_T)=f(m)+f'(m)[(P_T-m)^+-(m-P_T)^+]+\\
     \int_{0}^{m}f''(K)(K-P_T)^+dK +\int_{m}^{\infty} f''(K)(P_T-K)^+dK,
\end{equation}
where $m\in\mathbb{R}^+$ and $f$ is continuous. The first term is interpreted as the payoff of a static position in $f(m)$ pure discount bonds (risk-free), each paying one dollar at time $T$. The second term can be interpreted as the payoff of $f'(m)$ calls with strike $m$ minus the payoff of $f'(m)$ puts, also with strike $m$. The third term represents a static position in $f''(K)dK$ puts at all strikes less than $m$. Similarly, the fourth term represents a static position in $f''(K)dK$ calls at all strikes greater than $m$.

In the absence of arbitrage, a decomposition similar to \eqref{replicador con opciones} must also hold between initial values. Let $V_0^f$ and $B_0$ be the initial values of the payoff and pure discount bond, respectively; $P_0(K)$ and $C_0(K)$ denote the initial prices of the put and call options with strike $K$, respectively. Thus, the present value of the aforementioned payoff is:

\begin{equation}\label{replicador con opciones inicila time}
     V_0^f = f(m)B_0 + f'(m)[C_0(m)-P_0(m)]+\\
     \int_{0}^{m} f''(K)P_0(K) dK + \int_{m}^{\infty} f''(K)C_0(K)dK.
\end{equation}

Consequently, the value of an arbitrary payoff can be determined solely from bond and option prices. Note that no assumption has been made regarding the stochastic process governing futures prices.

\section{Portfolio that Statically Replicates a Constant Product Market}

We use the previous result to statically replicate the value of a constant product market, $P_V^T = 2 \sqrt{k m^T}$, at a final price $m^T$.

Considering $f(-)=2\sqrt{k(-)}$, $P_T = m_p^T$ and $m = m_p^{0}$ in equation \eqref{replicador con opciones inicila time}, we obtain:
\begin{equation}
    P_V^{T}=f(m_p^{0})+f'(m_p^{0})[C(m_p^{0})-P(m_p^{0})]+\\
     \int_{0}^{m_p^{0}}f''(K)P(K)dK +\int_{m_p^{0}}^{\infty} f''(K)C(K)dK.
\end{equation}

\begin{itemize}
    \item Nominal value of the bond: $f(m^{0})= 2\sqrt{k m^{0}}$
    \item Notional value of options with strike $m_p^{0}$: $f'(m_p^{0})=\sqrt{\frac{k}{m^{0}}}$ 
    \item Notional value of options with strike $K$: $f''(K)=-\frac{1}{2}\sqrt{\frac{k}{K^3}}$
\end{itemize}

Since $(P_T - m)^+ - (m - P_T)^+ = P_T - m$, we can replicate the portfolio with a bond, futures, and options on the underlying asset, all maturing at $T$ with the specifications above, as follows:

\begin{equation}
    P_V^{T}=f(m_p^{0})+f'(m_p^{0})(m^T - m_p^{0})+\\
     \int_{0}^{m_p^{0}}f''(K)P(K)dK +\int_{m_p^{0}}^{\infty} f''(K)C(K)dK.
\end{equation}


\subsection{Example}

Consider a constant product market with the state:
    $$(t,x,y)=(0,200,10).$$

The initial value of the pool (in units of $Y$) is:
$$P_V^{0} = (m_p^{0} x^{0}+y^{0})=0.05\times 200 + 10 = 20.$$  

The replicating portfolio is:

\begin{itemize}
    \item Nominal value of the bond: $f(m_p^{0})= 2\sqrt{k m_p^{0}}= 2\sqrt{2000\times 0.05}=20$

    \item Notional value of futures $m_p^{0}$: $f'(m_p^{0})=\sqrt{\frac{k}{m_p^{0}}}= \sqrt{\frac{2000}{0.05}}=200$

    \item Notional value of options with strike $K$:  
    $$f''(K)=-\frac{1}{2}\sqrt{\frac{k}{K^3}} dK = -\frac{1}{2}\sqrt{\frac{2000}{K^3}} dK$$
\end{itemize}

If, for example, we have a discrete number of strikes $K=(0.125, 0.025, \dots , 0.1)$, the payoff at maturity is:

$$
    P_V^{T}=f(m_p^{0})+f'(m_p^{0})(m^T - m_p^{0})+
     \sum_{K\leq m_p^{0}}f''(K)P(K)\Delta K +\sum_{K>m_p^{0}} f''(K)C(K)\Delta K.
$$

Table \ref{tabla} shows the notional value for four strikes each for call and put options. Figure \ref{replicador} compares the total value of the portfolio and the replicating portfolio at time $T$.

\begin{figure}[!htt]
    	\centering
    	\includegraphics[width=10cm]{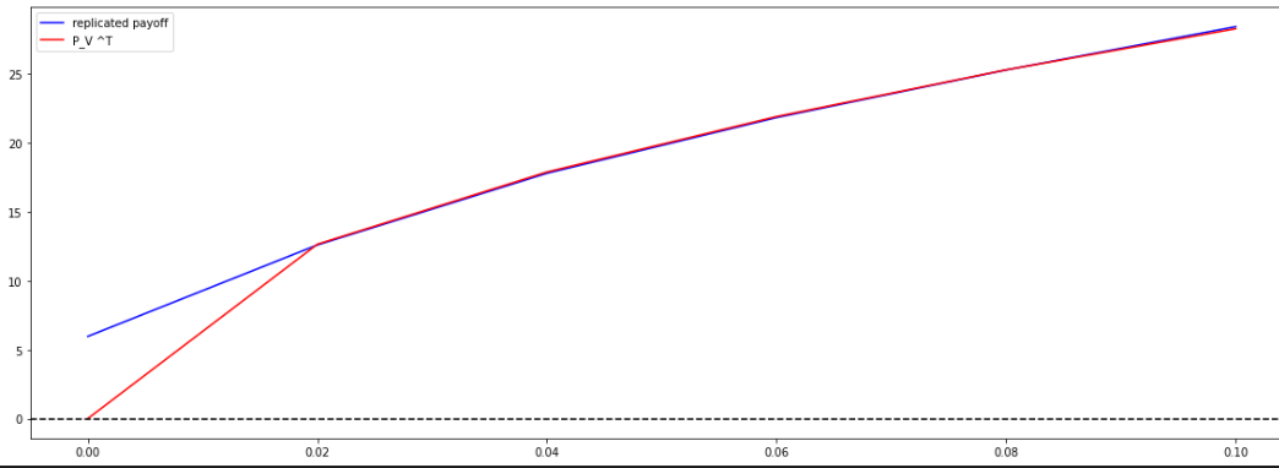}\caption{Value of the constant product market vs replicating portfolio}\label{replicador}
\end{figure}
    
\begin{figure}[!htt]
    	\centering
    	\includegraphics[width=10cm]{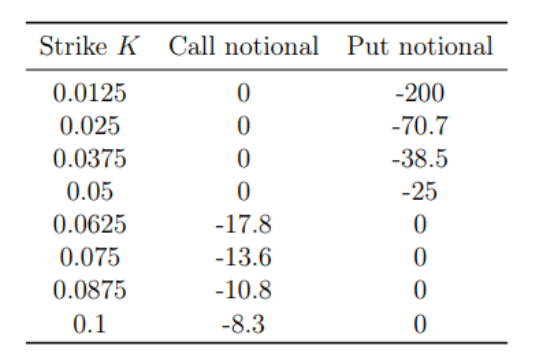}\caption{Notional value of options by strike}\label{tabla}
\end{figure}

\clearpage

\section{Hedging Against Impermanent Loss}
\subsection{Impermanent Loss}

Impermanent loss (IL) refers to the potential loss incurred when providing liquidity to a pool compared to holding the tokens statically outside the pool. Due to price fluctuations of token pairs, the impermanent loss materializes once the liquidity provider withdraws from the pool. We formally define impermanent loss as follows.

\begin{Def*}
For liquidity provision with initial deposits $x$ and $y$ of tokens $X$ and $Y$ at initial time $0$, the realized impermanent loss (IL) upon withdrawing liquidity at time $t$ is the capital loss compared to statically holding the token pair from the initial moment $0$. Specifically, impermanent loss (IL) is computed as:

$$IL = V_{pool} - V_{hold} = (y_t + x_t P_t) - (y_0 + x_0 P_t),$$
where $x_t$ and $y_t$ are amounts withdrawn at time $t$, and $P_t$ is the price of one unit of token $X$ denominated in units of token $Y$.
\end{Def*}

This definition aligns with industry practice, where liquidity providers regard this loss as the cost of repurchasing their initial liquidity upon exiting the pool.

\subsection{Formulas for IL}
Considering a constant product liquidity pool (CPM), we have:
\[
xy = k, \quad \frac{x}{y} = P.
\]

From these equations, we derive:
\[
x = \sqrt{\frac{k}{P}}, \quad y = \sqrt{k P}.
\]

We now express IL in terms of price:
\begin{align*}
    V_{Pool}(P) &= y + xP = 2\sqrt{kP_0}\sqrt{\frac{P}{P_0}} = V_{pool}(P_0)\sqrt{\frac{P}{P_0}}, \\
    V_{Hold}(P) &= y_0 + x_0P = \frac{V_{Hold}(P_0)}{2}\left(\frac{P}{P_0}+1\right).
\end{align*}

Given $V_{Pool}(P_0) = V_{Hold}(P_0)$, we have:
\[
IL(P) = V_{Hold}(P_0)\left(\sqrt{\frac{P}{P_0}}-\frac{1}{2}\left(\frac{P}{P_0}+1\right)\right).
\]

We then calculate the derivative of $IL(P)$ with respect to $P$:
\[
\frac{\partial}{\partial P} IL(P) = \frac{V_{Hold}(P_0)}{2P_0}\left(\sqrt{\frac{P_0}{P}} - 1\right).
\]

Thus, at any time $t$, we have:
\[
IL(P_t) = V_{Hold}(P_0)\left(\sqrt{\frac{P_t}{P_0}}-\frac{1}{2}\left(\frac{P_t}{P_0}+1\right)\right),
\]
and
\[
\frac{\partial}{\partial P} IL(P_t) = \frac{V_{Hold}(P_0)}{2P_0}\left(\sqrt{\frac{P_0}{P_t}} - 1\right).
\]

\subsubsection{Nonlinearity of IL}
Before presenting a specific hedging strategy, we highlight the inherent difficulty of hedging IL due to its multidirectional nonlinear nature.

Figure \ref{IL} illustrates impermanent loss for initial holdings of 50 ETH and 85,000 USDC, at a market price $P_{ETH} = \frac{85,000}{50} = 1700$, corresponding to an initial capital of $C = 50\times1700 + 85,000 = 170,000$.

\begin{figure}[ht]
    \centering
    \includegraphics[width=12cm]{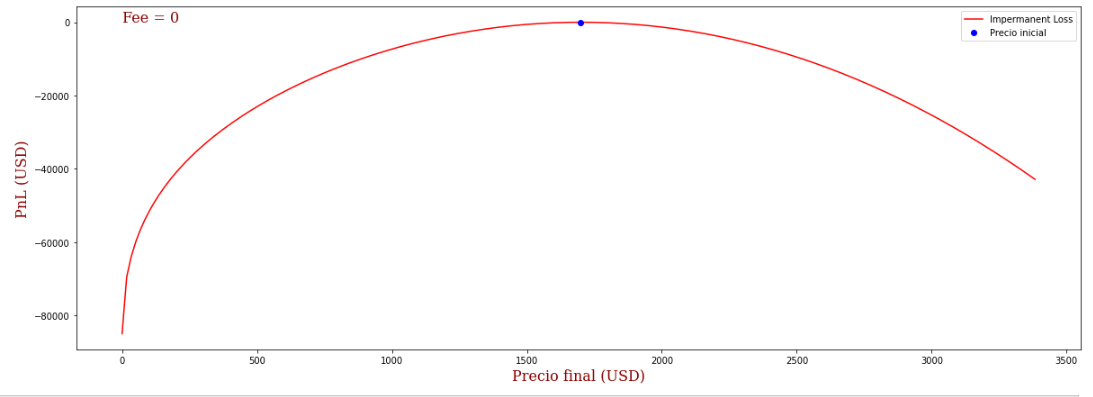}
    \caption{Impermanent Loss}\label{IL}
\end{figure}

Note that potential losses are asymmetric regarding price movements. A price decline impacts the provider more severely due to both the internal rebalancing of the pool and direct exposure to the declining asset.

\subsection{Static Hedging of IL with a Long Strangle Strategy}

Consider a liquidity provider entering a constant product pool by depositing $x_0$ USDC and $y_0$ Ethereum (ETH). Let $r_p$ be the monthly return rate paid by the pool. At time $T$, the impermanent loss from depositing tokens into the pool is:
\[
IL(P_T) = c\left(\sqrt{\frac{P_T}{P_0}}-\frac{1}{2}\left(\frac{P_T}{P_0}+1\right)\right),
\]
where $c := x_0 + y_0 P_0$ is the initial capital in USD and $P_0$ is the ETH price at entry.

Given the flexibility of options in replicating sufficiently smooth payoff functions, we utilize European options as our hedging strategy. Due to the nonlinearity of IL observed previously, we propose a Long Strangle strategy with expiration at time $T$. This strategy combines linear payoffs from calls and puts to offset IL nonlinearity. Specifically, we buy $q_c$ call options at strike $K_c$ and $q_p$ put options at strike $K_p$, with respective premiums $d_c$ and $d_p$.

The payoff of the option strategy at time $T$ is:
\[
q_c(P_T - K_c)^+ + q_p(K_p - P_T)^+ - D,
\]
where $D := q_c d_c + q_p d_p$ is the total cost of the options.

Thus, the combined pool and hedging strategy payoff at time $T$ is:
\[
f_T^{pool+str} = r_p c + payoff_{str} - D + IL(P_T).
\]

To avoid losses, we require $f_T^{pool+str}\geq 0$ over a limited interval $[P_i, P_s]$ containing $P_0$. The following proposition relates variables in this context:

\begin{prop}\label{prop:hed}
Suppose we know the Ethereum price $P_0$ at time $0$, the capital $c$ in dollars deposited in the pool at time $0$, and the pool return rate $r_p$ at time $T$. Then, to hedge impermanent losses at time $T$ within an interval $[P_i, P_s]\subset \mathbb{R}_{\geq 0}$ containing $P_0$, using a European option-based Long Strangle strategy, it suffices that the following inequalities are simultaneously satisfied:
\begin{align*}
    \frac{c}{2}\left(\frac{1}{\sqrt{P_iP_0}} - \frac{1}{P_0}\right)&\leq q_p,\\[5pt]
     D-\min\{IL(K_c),IL(K_p)\}&\leq r_p c,\\[5pt]
    -\frac{c}{2}\left(\frac{1}{\sqrt{P_s P_0}}-\frac{1}{P_0}\right)&\leq q_c,
\end{align*}
where $K_c, K_p$ denote the strikes; $d_c, d_p$ the premiums; $q_c, q_p$ the respective quantities of European call and put options expiring at $T$; and $D=q_c d_c+ q_p d_p$ is the strategy's total cost.
\end{prop}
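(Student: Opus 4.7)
The plan is to split $[P_i,P_s]$ at the two strikes into $[P_i,K_p]\cup[K_p,K_c]\cup[K_c,P_s]$ — implicitly assuming the natural long-strangle configuration $P_i\le K_p\le P_0\le K_c\le P_s$ — and to show that each of the three displayed inequalities controls the sign of $f_T^{pool+str}$ on exactly one of these pieces, with the two outer pieces then reducing to the middle one by a monotonicity argument.

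I would start with the middle interval $[K_p,K_c]$, where both options are out of the money so $f_T^{pool+str}(P_T)=r_p c-D+IL(P_T)$. The change of variable $u=\sqrt{P_T/P_0}$ rewrites the IL formula as $IL(P_T)=-\frac{c}{2}(u-1)^2$, making it transparent that $IL$ is concave in $P_T$ (equivalently, $IL'$ is a strictly decreasing function of $P_T$). A concave function on a closed interval attains its minimum at an endpoint, so $\min_{[K_p,K_c]}IL=\min\{IL(K_p),IL(K_c)\}$, and the requirement $f_T^{pool+str}\ge 0$ on this piece is precisely the second displayed inequality.

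Next I would handle the left piece $[P_i,K_p]$, where the put is in the money and the call contributes nothing, so $f_T^{pool+str}(P_T)=r_p c+q_p(K_p-P_T)-D+IL(P_T)$. Its derivative in $P_T$ is $-q_p+IL'(P_T)$, and on this piece $IL'(P_T)=\frac{c}{2P_0}(\sqrt{P_0/P_T}-1)$ is positive and decreasing in $P_T$, hence maximized at $P_T=P_i$. After the algebraic simplification $IL'(P_i)=\frac{c}{2}(1/\sqrt{P_iP_0}-1/P_0)$, the first displayed inequality reads $q_p\ge IL'(P_i)$, which forces $f_T^{pool+str}$ to be non-increasing in $P_T$ throughout $[P_i,K_p]$. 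Its minimum on this piece is therefore attained at $P_T=K_p$, where non-negativity was already secured by the middle step. The right piece $[K_c,P_s]$ is handled symmetrically: the derivative is $q_c+IL'(P_T)$, now $-IL'(P_T)$ is non-negative and increasing on $(P_0,\infty)$, the third inequality reads $q_c\ge -IL'(P_s)$, and we conclude that $f_T^{pool+str}$ is non-decreasing on $[K_c,P_s]$ with minimum again at $K_c$.

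The main subtlety is organizational rather than analytic: one has to notice that the three inequalities do not control $f_T^{pool+str}$ uniformly on the whole interval, but rather each one matches exactly one of the three pieces, and that the outer-piece monotonicity arguments only become useful after the concavity step on the middle piece has pinned down non-negativity at the two strikes $K_p,K_c$. The only genuine assumption left implicit by the statement — namely $K_p\le P_0\le K_c$, so that the strangle straddles the entry price — is worth flagging at the outset.
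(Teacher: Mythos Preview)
Your argument is correct and follows exactly the paper's own proof: the same three-piece decomposition $[P_i,K_p]\cup[K_p,K_c]\cup[K_c,P_s]$, the same endpoint-minimum observation for $IL$ on the middle piece, and the same derivative bound $f'=\pm q_{\bullet}+IL'$ on the outer pieces to reduce to the values at $K_p$ and $K_c$. Your version is in fact slightly more explicit than the paper's, which simply asserts $IL(P)\ge\min\{IL(K_p),IL(K_c)\}$ ``by a simple analysis of $IL$'' where you supply the concavity (via the $-\tfrac{c}{2}(u-1)^2$ rewriting), and you are right to flag the implicit ordering $P_i\le K_p\le P_0\le K_c\le P_s$, which the paper also uses without stating.
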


\begin{proof}
To prove this statement, we analyze the following cases separately:
\begin{itemize}
    \item If $P\in [K_p,K_c]$, we have:
    \[
    f(P):=r_p c+q_c(P-K_c)^+ +q_p(K_p-P)^+ -D +IL(P) = r_p c - D + IL(P).
    \]
    Given a simple analysis of $IL$:
    \[
    IL(P)\geq \min\{IL(K_c),IL(K_p)\},
    \]
    thus,
    \[
    f(P)=r_p c - D + IL(P)\geq r_p c - D +\min\{IL(K_c),IL(K_p)\}.
    \]
    Therefore, if we impose:
    \[
    r_p c\geq D - \min\{IL(K_c),IL(K_p)\},
    \]
    it ensures $f(P)\geq 0$.
    
    \item If $P\in [P_i,K_p]$, we have:
    \[
    f(P)=r_p c+q_c(P-K_c)^+ +q_p(K_p-P)^+ -D +IL(P) = r_p c + q_p(K_p - P) - D + IL(P).
    \]
    Note that if $f(K_p)\geq 0$ and $f$ is decreasing within this interval ($f'<0$), then $f(P)\geq 0$. Given:
    \[
    f'(P)=IL'(P)-q_p =\frac{c}{2}\left(\frac{1}{\sqrt{PP_0}}-\frac{1}{P_0}\right)-q_p,
    \]
    and observing that:
    \[
    \frac{c}{2}\left(\frac{1}{\sqrt{PP_0}}-\frac{1}{P_0}\right)\leq \frac{c}{2}\left(\frac{1}{\sqrt{P_iP_0}}-\frac{1}{P_0}\right),
    \]
    it suffices to require:
    \[
    q_p\geq\frac{c}{2}\left(\frac{1}{\sqrt{P_iP_0}}-\frac{1}{P_0}\right).
    \]
    Moreover, observe that $f(K_p)\geq 0$ is already ensured by the first inequality.
    
    \item If $P\in [K_c, P_s]$, we have:
    \[
    f(P)=r_p c+q_c(P-K_c)^+ +q_p(K_p-P)^+ -D +IL(P) = r_p c + q_c(P - K_c) - D + IL(P).
    \]
    Note that if $f(K_c)\geq 0$ and $f$ is increasing in this interval ($f'>0$), then $f(P)\geq 0$. Given:
    \[
    f'(P)=IL'(P)+q_c =\frac{c}{2}\left(\frac{1}{\sqrt{PP_0}}-\frac{1}{P_0}\right)+q_c,
    \]
    and since:
    \[
    -\frac{c}{2}\left(\frac{1}{\sqrt{PP_0}}-\frac{1}{P_0}\right)< -\frac{c}{2}\left(\frac{1}{\sqrt{P_sP_0}}-\frac{1}{P_0}\right),
    \]
    it suffices to require:
    \[
    q_c\geq-\frac{c}{2}\left(\frac{1}{\sqrt{P_sP_0}}-\frac{1}{P_0}\right).
    \]
    Again, note that $f(K_c)\geq 0$ is satisfied by the initial condition.
\end{itemize}
Therefore, if all inequalities are satisfied simultaneously, we conclude that for any $P_T\in [P_i, P_s]$, it holds:
\[
f_T^{pool+str}=f(P_T)\geq 0.
\]
\end{proof}

\section{Hedging Against Impermanent Loss}
\subsection{Impermanent Loss}

Impermanent loss (IL) refers to the potential loss incurred when providing liquidity to a pool compared to holding the tokens statically outside the pool. Due to price fluctuations of token pairs, the impermanent loss materializes once the liquidity provider withdraws from the pool. We formally define impermanent loss as follows.

\begin{Def*}
For liquidity provision with initial deposits $x$ and $y$ of tokens $X$ and $Y$ at initial time $0$, the realized impermanent loss (IL) upon withdrawing liquidity at time $t$ is the capital loss compared to statically holding the token pair from the initial moment $0$. Specifically, impermanent loss (IL) is computed as:

$$IL = V_{pool} - V_{hold} = (y_t + x_t P_t) - (y_0 + x_0 P_t),$$
where $x_t$ and $y_t$ are amounts withdrawn at time $t$, and $P_t$ is the price of one unit of token $X$ denominated in units of token $Y$.
\end{Def*}

This definition aligns with industry practice, where liquidity providers regard this loss as the cost of repurchasing their initial liquidity upon exiting the pool.

\subsection{Formulas for IL}
Considering a constant product liquidity pool (CPM), we have:
\[
xy = k, \quad \frac{x}{y} = P.
\]

From these equations, we derive:
\[
x = \sqrt{\frac{k}{P}}, \quad y = \sqrt{k P}.
\]

We now express IL in terms of price:
\begin{align*}
    V_{Pool}(P) &= y + xP = 2\sqrt{kP_0}\sqrt{\frac{P}{P_0}} = V_{pool}(P_0)\sqrt{\frac{P}{P_0}}, \\
    V_{Hold}(P) &= y_0 + x_0P = \frac{V_{Hold}(P_0)}{2}\left(\frac{P}{P_0}+1\right).
\end{align*}

Given $V_{Pool}(P_0) = V_{Hold}(P_0)$, we have:
\[
IL(P) = V_{Hold}(P_0)\left(\sqrt{\frac{P}{P_0}}-\frac{1}{2}\left(\frac{P}{P_0}+1\right)\right).
\]

We then calculate the derivative of $IL(P)$ with respect to $P$:
\[
\frac{\partial}{\partial P} IL(P) = \frac{V_{Hold}(P_0)}{2P_0}\left(\sqrt{\frac{P_0}{P}} - 1\right).
\]

Thus, at any time $t$, we have:
\[
IL(P_t) = V_{Hold}(P_0)\left(\sqrt{\frac{P_t}{P_0}}-\frac{1}{2}\left(\frac{P_t}{P_0}+1\right)\right),
\]
and
\[
\frac{\partial}{\partial P} IL(P_t) = \frac{V_{Hold}(P_0)}{2P_0}\left(\sqrt{\frac{P_0}{P_t}} - 1\right).
\]

\subsubsection{Nonlinearity of IL}
Before presenting a specific hedging strategy, we highlight the inherent difficulty of hedging IL due to its multidirectional nonlinear nature.

Figure \ref{IL} illustrates impermanent loss for initial holdings of 50 ETH and 85,000 USDC, at a market price $P_{ETH} = \frac{85,000}{50} = 1700$, corresponding to an initial capital of $C = 50\times1700 + 85,000 = 170,000$.

\begin{figure}[ht]
    \centering
    \includegraphics[width=12cm]{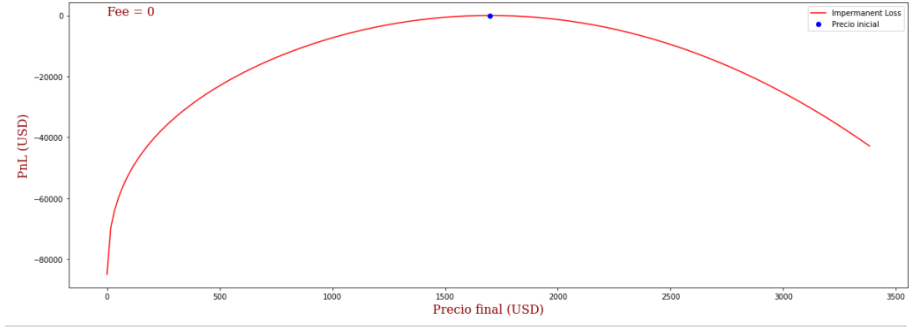}
    \caption{Impermanent Loss}\label{IL}
\end{figure}

Note that potential losses are asymmetric regarding price movements. A price decline impacts the provider more severely due to both the internal rebalancing of the pool and direct exposure to the declining asset.

\subsection{Static Hedging of IL with a Long Strangle Strategy}

Consider a liquidity provider entering a constant product pool by depositing $x_0$ USDC and $y_0$ Ethereum (ETH). Let $r_p$ be the monthly return rate paid by the pool. At time $T$, the impermanent loss from depositing tokens into the pool is:
\[
IL(P_T) = c\left(\sqrt{\frac{P_T}{P_0}}-\frac{1}{2}\left(\frac{P_T}{P_0}+1\right)\right),
\]
where $c := x_0 + y_0 P_0$ is the initial capital in USD and $P_0$ is the ETH price at entry.

Given the flexibility of options in replicating sufficiently smooth payoff functions, we utilize European options as our hedging strategy. Due to the nonlinearity of IL observed previously, we propose a Long Strangle strategy with expiration at time $T$. This strategy combines linear payoffs from calls and puts to offset IL nonlinearity. Specifically, we buy $q_c$ call options at strike $K_c$ and $q_p$ put options at strike $K_p$, with respective premiums $d_c$ and $d_p$.

The payoff of the option strategy at time $T$ is:
\[
q_c(P_T - K_c)^+ + q_p(K_p - P_T)^+ - D,
\]
where $D := q_c d_c + q_p d_p$ is the total cost of the options.

Thus, the combined pool and hedging strategy payoff at time $T$ is:
\[
f_T^{pool+str} = r_p c + payoff_{str} - D + IL(P_T).
\]

To avoid losses, we require $f_T^{pool+str}\geq 0$ over a limited interval $[P_i, P_s]$ containing $P_0$. The following proposition relates variables in this context:

\begin{prop}\label{prop:hed}
Given Ethereum's price $P_0$ at time $0$, initial capital $c$ deposited at time $0$, and pool return rate $r_p$ at time $T$, covering impermanent losses within the interval $[P_i, P_s]\subset \mathbb{R}_{\geq 0}$ containing $P_0$ using a European option Long Strangle strategy requires the simultaneous satisfaction of the following inequalities:
\begin{align*}
    \frac{c}{2}\left(\frac{1}{\sqrt{P_iP_0}} - \frac{1}{P_0}\right)&\leq q_p,\\
    D-\min\{IL(K_c),IL(K_p)\}&\leq r_p c,\\
    -\frac{c}{2}\left(\frac{1}{\sqrt{P_s P_0}}-\frac{1}{P_0}\right)&\leq q_c.
\end{align*}
Here, $K_c, K_p$ denote the strikes, $d_c, d_p$ the premiums, and $q_c, q_p$ the quantities of call and put options, respectively, all with expiration at $T$. The total cost of the strategy is $D=q_c d_c+ q_p d_p$.
\end{prop}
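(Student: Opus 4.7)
The plan is to suppose that the total payoff $f(P) := r_p c + q_c(P-K_c)^+ + q_p(K_p-P)^+ - D + IL(P)$ is non-negative for every $P \in [P_i, P_s]$ and to deduce each of the three inequalities by examining $f$ separately on the natural partition $[P_i,K_p] \cup [K_p,K_c] \cup [K_c,P_s]$. On each subinterval the strangle's positive parts collapse into a closed form, so $f$ reduces to an explicit expression whose infimum is tractable.

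I would start with the middle region $[K_p,K_c]$, on which $(P-K_c)^+ = (K_p-P)^+ = 0$ and $f(P) = r_p c - D + IL(P)$. Since $IL$ is concave with $IL(P_0) = 0$ its maximum and $P_0 \in [K_p,K_c]$, the infimum of $IL$ over this subinterval is attained at an endpoint and equals $\min\{IL(K_c),IL(K_p)\}$. Imposing $f \geq 0$ at the minimizing endpoint rearranges exactly into the second inequality $D - \min\{IL(K_c),IL(K_p)\} \leq r_p c$, so this inequality is both necessary and sufficient for coverage on the middle subinterval.

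For the left wing $[P_i,K_p]$ the payoff is $f(P) = r_p c + q_p(K_p-P) - D + IL(P)$ with derivative $f'(P) = IL'(P) - q_p$. Concavity of $IL$ together with $IL'(P_0)=0$ makes $IL'$ strictly positive and decreasing on $[P_i,K_p]$, so its supremum $IL'(P_i) = \tfrac{c}{2}\bigl(\tfrac{1}{\sqrt{P_iP_0}} - \tfrac{1}{P_0}\bigr)$ is attained at the left endpoint. The key step is to show that if $q_p$ fell below this threshold, $f'$ would be positive throughout a right neighborhood of $P_i$, $f$ would increase away from $P_i$ and then descend past $K_p$ with slope dictated by $IL'$, and combined with the tight case of the second inequality at $K_p$, this would force $f(P_i) < 0$. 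The contradiction yields the first inequality. The right wing $[K_c,P_s]$ is fully symmetric: using that $-IL'$ is positive and increasing on $[P_0,\infty)$, the analogous endpoint argument at $P_s$ forces the third inequality.

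The main obstacle is that pointwise non-negativity of $f$ does not naively deliver the slope inequalities, since one could attempt to compensate a deficient $q_p$ (or $q_c$) with a large pool return $r_p c$ or smaller option premiums. Making necessity rigorous therefore requires coupling the wing analysis to the middle-interval inequality: the second inequality pins $r_p c - D$ against $IL(K_p)$ or $IL(K_c)$, which in turn bounds how much the wings can borrow from the pool return, and it is precisely this coupling, combined with the concavity of $IL$, that forces the slope bounds at $P_i$ and $P_s$ to be saturated as stated.
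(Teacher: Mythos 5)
You have reversed the direction of the implication. Despite the word ``requires'' in the statement, the paper's claim (and its proof) is one of \emph{sufficiency}: if the three inequalities hold, then the combined payoff $f(P)=r_p c+q_c(P-K_c)^+ +q_p(K_p-P)^+ -D+IL(P)$ is nonnegative on all of $[P_i,P_s]$. The paper's argument is the forward version of exactly the ingredients you list: on $[K_p,K_c]$ concavity of $IL$ gives $IL(P)\geq\min\{IL(K_c),IL(K_p)\}$, so the second inequality yields $f\geq 0$ there; on $[P_i,K_p]$ the first inequality forces $q_p\geq IL'(P)$ for every $P$ in the wing (since $IL'$ is decreasing its supremum there is $IL'(P_i)$), hence $f'\leq 0$ and $f(P)\geq f(K_p)\geq 0$; the right wing is symmetric. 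Your analysis of the middle interval and your identification of the extremal slopes at $P_i$ and $P_s$ are correct and, run forwards, would reproduce this proof.

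The genuine gap is that the necessity you set out to prove for the first and third inequalities is false, so the ``key step'' you defer cannot be completed. The second inequality is only a lower bound on $r_p c$; nothing pins $r_p c - D$ to the value $-\min\{IL(K_c),IL(K_p)\}$, so there is no ``tight case at $K_p$'' to invoke. Concretely, take $q_p=q_c=0$, $D=0$, and $r_p c \geq -\min\{IL(P_i),IL(P_s)\}$. Since $IL$ is concave with maximum $0$ at $P_0$, its minimum over $[P_i,P_s]$ is attained at an endpoint, so $f(P)=r_p c+IL(P)\geq 0$ on the whole interval; coverage holds while the first inequality fails, because $0<\frac{c}{2}\left(\frac{1}{\sqrt{P_iP_0}}-\frac{1}{P_0}\right)$ whenever $P_i<P_0$. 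The coupling you describe therefore does not exist, and only the middle inequality is genuinely necessary (and only under the implicit assumption $[K_p,K_c]\subseteq[P_i,P_s]$). To match the intended result you should prove the converse implication: assume the three inequalities and deduce $f\geq 0$ on each of the three subintervals.
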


The proof involves analyzing price intervals separately to ensure the combined payoff remains non-negative.


\section{Appendix}\label{apendice}

\subsection{The Dirac Delta Function}
The Dirac delta or Dirac delta function is a distribution first introduced by British physicist Paul Dirac. As a distribution, it defines a functional in integral form over a certain space of functions. For more information, see \cite{balakrishnan2003all} and \cite{salah2015delta}.

To intuitively understand the Dirac delta function, consider a rectangle with one side along the $x$-axis, centered at $x = x_0$, such that the rectangle's area equals 1 (equivalent to a uniform probability distribution). Clearly, many such rectangles exist, as illustrated in Figure \ref{Dirac}. We can construct a Dirac delta function starting with a square of height and width 1. If we halve the width and double the height, the area remains constant. Repeating this process indefinitely, as width approaches zero, height approaches infinity, yet the area remains exactly 1. Any rectangle of unit area centered at $x_0$ can be expressed as:

\begin{equation*}
    \delta_{\varepsilon}(x - x_0)= \left\{ \begin{array}{lcc}
             0 &   if  & x < x_0 - \frac{\varepsilon}{2},  \\[6pt]
             \frac{1}{\varepsilon} &  if & x_0 - \frac{\varepsilon}{2} < x < x_0 + \frac{\varepsilon}{2}, \\[6pt]
             0 &  if  & x \geq x_0 + \frac{\varepsilon}{2}.
             \end{array}
   \right.
\end{equation*}

\begin{figure}[ht]\label{Dirac}
    \centering
    \includegraphics[width=10cm]{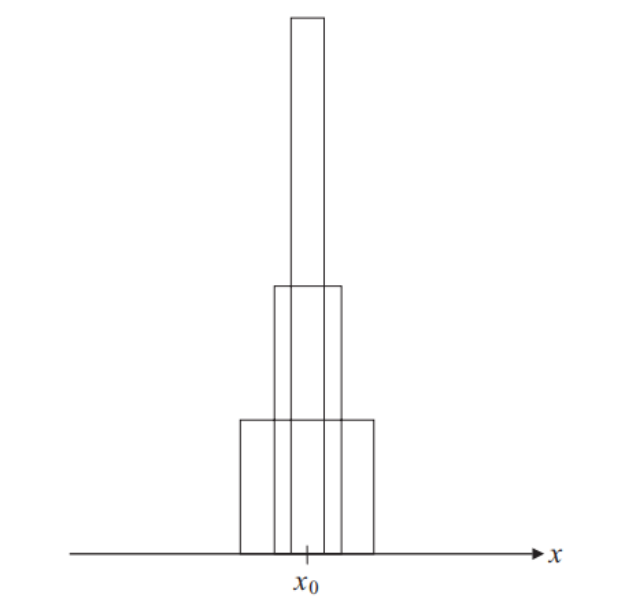}
    \caption{Geometric construction of the Dirac delta function}
\end{figure}

The Dirac delta function located at $x = x_0$ can be defined as the limit case when $\varepsilon$ tends to zero:

\begin{equation}
    \delta(x - x_0) = \lim_{\varepsilon \to 0} \delta_{\varepsilon}(x - x_0).
\end{equation}

A more general definition relies on fulfilling the following two properties:
\begin{align*}
    &\delta(x) = 0,\quad x \neq x_0,\\[5pt]
    &\int_{-\infty}^{+\infty}\delta(x)dx = 1.
\end{align*}

While the delta function has many properties, this work primarily uses the following:

\begin{itemize}
    \item \textbf{Sifting Property:} For any function $f(x)$ continuous at $x_0$,
    \begin{equation}
        \int_{-\infty}^{+\infty}f(x)\delta(x - x_0)dx = f(x_0),
    \end{equation}
    This property provides a measurement interpretation, indicating that the delta function "measures" the value of $f(x)$ at the point $x_0$.

    \item \textbf{Relationship with the Heaviside Step Function:}  
    The Heaviside step function is defined as:
    \begin{equation}
        H(x) = \left\{\begin{array}{lcc}
            0 & if & x < 0, \\[5pt]
            1 & if & x \geq 0.
        \end{array}\right.
    \end{equation}
    The delta function relates to the Heaviside step function by:
    \begin{equation}\label{deriv de H}
        \delta(x) = \frac{d}{dx}(H(x)).
    \end{equation}
    To verify this, one shows:
    \[
    \int_{-\infty}^{+\infty}H'(x)\phi(x)dx = \int_{-\infty}^{+\infty}\delta(x)\phi(x)dx,
    \]
    by integrating by parts.
\end{itemize}

\subsection{Expansion with Bonds and Options}

For any payoff function $f(F)$, by the sifting property of the Dirac delta, we have:
\begin{align}\label{ec: f con delta}
    f(F)&= \int_{0}^{+\infty} f(K)\delta(F - K)dK\\[5pt]
    &=\int_0^{\kappa}f(K)\delta(F - K)dK+\int_{\kappa}^{+\infty}f(K)\delta(F - K)dK,
\end{align}
for any non-negative $\kappa$. Integrating by parts each term, we use the following properties:

\begin{itemize}
    \item $\frac{d}{dx}(1(F < x)) = \delta(F - x)$,
    \item $\frac{d}{dx}(1(F \geq x)) = -\delta(x - F)$,
    \item $\frac{d}{dx}(-(F - x)^+) = 1(F \geq x)$,
    \item $\frac{d}{dx}((x - F)^+) = 1(F < x)$,
\end{itemize}
where
\begin{equation*}
    1(x \leq F)= \left\{\begin{array}{lcc}
             1 & if & x \leq F, \\[5pt]
             0 & if & x > F,
             \end{array}\right. \quad\text{and}\quad (F - x)^+ = \max\{F - x, 0\},
\end{equation*}
and similarly for other cases. The first two properties follow analogously to equation \eqref{deriv de H}, while the last two are straightforwardly verified.

Integrating each term of \eqref{ec: f con delta} by parts yields:
\begin{align*}
    f(F) &= f(K)1(F < K)\Big|_{0}^{\kappa} - \int_{0}^{\kappa} f'(K)1(F < K)dK \\[5pt]
    &- f(K)1(F \geq K)\Big|_{\kappa}^{\infty} + \int_{\kappa}^{\infty} f'(K)1(F \geq K)dK.
\end{align*}
Integrating again by parts, we obtain:
\begin{align*}
    f(F)&= f(\kappa)1(F < \kappa) - f'(K)(\kappa - F)^+\Big|_0^{\kappa} + \int_0^{\kappa} f''(K)(K - F)^+ dK\\[5pt]
    &+ f(\kappa)1(F \geq \kappa) - f'(K)(F - K)^+\Big|_{\kappa}^{\infty} + \int_{\kappa}^{\infty} f''(K)(F - K)^+ dK\\[5pt]
    &= f(\kappa) + f'(\kappa)\left[(F - \kappa)^+ - (\kappa - F)^+\right] \\[5pt]
    &+ \int_{0}^{\kappa} f''(K)(K - F)^+ dK + \int_{\kappa}^{\infty} f''(K)(F - K)^+ dK.
\end{align*}


\bibliographystyle{plain}
\bibliography{biblio}
\end{document}